\newcommand{\argmax}{\mathop{\rm argmax}}
\newtheorem{thm}{Theorem}[section]
\newtheorem{prop}{Proposition}[section]
\newtheorem{defn}[thm]{Definition}
\newtheorem{rmk}[thm]{Remark}
\newtheorem{ex}{Example}[section]
\def\cov{{\rm{cov}}}
\def\tr{{\rm{tr}}}
\def\rmspan{{\rm{span}}}
\def\SIR{{\rm{SIR}}}
\def\PSIR{{\rm{PSIR}}}
\def\SAVE{{\rm{SAVE}}}
\def\real{{\mathbb R}}
\def\syx{{\mathcal{S}_{Y|X}}}
\def\syxw{{\mathcal{S}_{Y|X}^{(W)}}}
\def\syxe{{\mathcal{S}_{E[Y|X]}}}
\def\syz{{\mathcal{S}_{Y|Z}}}
\def\sywz{{\mathcal{S}_{(Y,W)|Z}}}
\def\swz{{\mathcal{S}_{W|Z}}}
\def\syze{{\mathcal{S}_{E[Y|Z]}}}
\def\syzw{{\mathcal{S}_{Y|Z}^{(W)}}}
\def\syzww{{\mathcal{S}_{Y_w|Z_w}}}
\def\se{{\mathcal{S}_{\rm{env}}}}
\def\de{{d_{\rm{env}}}}
\def\dehat{{\widehat{d}_{\rm{env}}}}
\def\Be{{B_{\rm{env}}}}
\def\Behat{{\widehat{B}_{\rm{env}}}}
\def\Kyz{{K_{Y|Z}}}
\def\Kywz{{K_{(Y,W)|Z}}}
\def\Kwz{{K_{W|Z}}}
\def\Kyzw{{K_{Y|Z}^{(W)}}}
\def\Kyzhat{{\widehat{K}_{Y|Z}}}
\def\Kywzhat{{\widehat{K}_{(Y,W)|Z}}}
\def\Kwzhat{{\widehat{K}_{W|Z}}}
\def\Kyzwhat{{\widehat{K}_{Y|Z}^{(W)}}}
\def\Kehat{{\widehat{K}_{\rm{env}}}}
\renewcommand\thmcontinues[1]{Continued}
\newcommand{\indep}{\;\, \rule[0em]{.03em}{.67em} \hspace{-.25em}
\rule[0em]{.65em}{.03em} \hspace{-.25em}
\rule[0em]{.03em}{.67em}\;\,}
\definecolor{DarkBlue}{rgb}{0,0.1,0.9}
\title{Sufficient dimension reduction with additional information}
\begin{document}

\author{ Hung Hung$^{a}$, Chih-Yen Liu$^{a}$, and Henry Horng-Shing Lu$^{b}$\\
\small $^{a}$Institute of Epidemiology and Preventive Medicine,
National Taiwan University, Taiwan\\
\small $^{b}$Institute of Statistics, National Chiao Tung
University, Taiwan}
\date{}
\maketitle

\begin{abstract}

Sufficient dimension reduction is widely applied to help model
building between the response $Y$ and covariate $X$. While the
target of interest is the relationship between $(Y,X)$, in some
applications we also collect additional variable $W$ that is
strongly correlated with $Y$. From a statistical point of view,
making inference about $(Y,X)$ without using $W$ will lose
efficiency. However, it is not trivial to incorporate the
information of $W$ to infer $(Y,X)$. In this article, we propose a
two-stage dimension reduction method for $(Y,X)$, that is able to
utilize the additional information from $W$. The main idea is to
confine the searching space, by constructing an envelope subspace
for the target of interest. In the analysis of breast cancer data,
the risk score constructed from the two-stage method can well
separate patients with different survival experiences. In the Pima
data, the two-stage method requires fewer components to infer the
diabetes status, while achieving higher classification
accuracy than conventional method.\\

\noindent {\textbf{Key Words}}: Additional information; Central
subspace; Efficiency; Envelopes; Sufficient dimension reduction.
\end{abstract}

\section{Introduction}\label{sec.introduction}

It is common to construct regression models or classification rules
based on the observed data. Among the collected covariates, it is
usually the case that some of them possess better performances than
the rest in inferring the response, but with higher obtaining cost.
Let $Y\in \mathbb{R}$ be the response of interest and $(X,W)$ be the
covariates, where $W\in \mathbb{R}$ is a better predictor of $Y$
than $X\in \mathbb{R}^p$. In the breast cancer data, for instance,
$Y$ is the survival time, $X$ contains 30 real-valued features from
digitized image of a fine needle aspirate, and $W$ is the index of
breast cancer stage which is defined by tumor size and the number of
lymph nodes. Obviously, the cancer stage provides more information
about the survival experience, but it is an invasive process to
collect $W$. Take the Pima data as another example, $Y$ represents
the diabetes status, $X$ is the vector of 7 biological measurements,
and $W$ is the score of the family disease history. It is
empirically shown that the family disease history is influential on
the diabetes status, but may not always be available due to its
frequent missingness. Consequently, although we can study the
relationship between $Y$ and $(X,W)$, the applicability of the
constructed model is limited since most of the collected subjects in
the future are not willing to (e.g., invasive test) or not able to
(e.g., missingness of the disease history) have the information of
$W$. In this situation, the research aim is to construct a
pre-screening model based on $(Y,X)$ only. For those susceptible
subjects, $W$ is collected (with more cost) to help identify the
true behavior of $Y$ (with higher precision) for further treatment.

A common situation in modern biomedical research is the
high-dimensionality of $X$, which makes the model building of
$(Y,X)$ difficult. Sufficient dimension reduction has been proposed
to reduce the dimension of $X$ while preserving its information for
$Y$ without requiring distributional assumption on $(Y,X)$. It aims
to search a matrix $\Gamma$ such that
\begin{eqnarray}
Y\indep X|\Gamma^TX. \label{edr}
\end{eqnarray}
It implies that all the information of $X$ with respect to $Y$ is
contained in $\Gamma^TX$. The space $\rmspan(\Gamma)$ is called the
dimension reduction subspace for the regression of $Y$ with respect
to $X$. Note that (\ref{edr}) always holds by taking $\Gamma$ as the
identity matrix $I_p$, but this choice is practically useless since
the dimension of $X$ is not reduced. The central subspace (CS)
induced by (\ref{edr}), denoted by $\syx$, is the intersection of
all such subspaces that satisfy (\ref{edr}), which carries least but
sufficient information of $X$  regarding $Y$. In this article, we
assume the existence and uniqueness of $\syx$, which can be
guaranteed under very mild conditions (Cook, 1998). Let
$d=\dim(\syx)$ be the structural dimension of $\syx$. The
relationship of $(Y,X)$ can be explored by the $(d+1)$-dimensional
plot of $(Y,\Gamma^TX)$, which is useful to model $(Y,X)$. Another
important concept related to this study is partial sufficient
dimension reduction (Chiaromonte, Cook, and Li, 2002), which aims to
find the intersection of all subspaces $\rmspan (\Gamma_{W})$ such
that
\begin{eqnarray}
Y\indep X|(\Gamma_{W}^{T}X, W). \label{partial_dr}
\end{eqnarray}
The resulting subspace is called the partial central subspace (PCS)
for the regression of $Y$ on $X$ given $W$, and is denoted by
$\syxw$.

Turning to the problem of constructing a model for $(Y,X)$, one can
directly apply any dimension reduction method on $(Y,X)$ to estimate
$\syx$. However, this simple strategy does not utilize the
information of $W$. As mentioned in our motivating examples, $W$
contains more information about $Y$ than $X$, and ignoring $W$ could
suffer the problem of inefficiency. This phenomenon can be partially
observed through the following example.
\begin{ex}\label{ex.model1}
Let $X\sim N(0,I_p)$ and $W|X \sim N(\beta^TX, 1-b^2)$ with
$b=\|\beta\|<1$. Let
\begin{eqnarray}
&& Y|(X,W)\sim N\left(\gamma^T X+aW,\sigma^2\right)\nonumber\\
&\Rightarrow& Y|X \sim N\left((\gamma+a\beta)^T X,
\sigma^2+a^2(1-b^2)\right),\label{model1}
\end{eqnarray}
where $\gamma\in \mathbb{R}^p$, $a\in \mathbb{R}^+$ controls the
influence of $W$ in explaining $Y$, $b$ controls the correlation
between $(X,W)$, and $\sigma^2$ is the conditional variance of $Y$
given $(X,W)$. It gives $\syx=\rmspan(a\beta+\gamma)$ and
$\syxw=\rmspan(\gamma)$
\end{ex}
\noindent One can observe from Example~\ref{ex.model1} that, without
considering $W$, the conditional variance of $Y$ is incremented by
$a^2(1-b^2)$, which is an increasing function of $a$. When $a$ is
large (i.e., $W$ plays an important role in affecting $Y$),
estimation procedure using $(Y,X,W)$ suffers less inherent
variation, and hence, has a chance of being more efficient than
using $(Y,X)$ only. Incorporating $W$ into the estimation of $\syx$
is not trivial. One possibility is to use the whole data $(Y,X,W)$
to estimate $\syxw$, and see if there are certain connections
between $\syxw$ and $\syx$. Chiaromonte, Cook, and Li (2002) show
that either $W \indep X | \Gamma_W^TX$ or $W \indep
Y|\Gamma_{W}^{T}X$ implies $\syx\subseteq\syxw$, and $W \indep Y|X$
implies $\syxw\subseteq\syx$. However, the stated conditions are not
easy to check in practice, and $\syx\neq\syxw$ in general. In
Example~\ref{ex.model1}, for instance, $\syxw=\rmspan(\gamma)$ is
different from the target of interest $\syx=\rmspan(a\beta+\gamma)$.
The aim of this study is to propose a dimension reduction method
that targets $\syx$ correctly, while utilizing all the information
of $(Y,X,W)$.

The rest of this article is organized as follows. In Section 2 we
review some dimension reduction methods. A two-stage estimation
procedure for $\syx$ that utilizes $W$ is introduced in Section 3.
Sections 4-5 conduct numerical studies to show the superiority of
the proposed method. The paper is ended with a discussion in
Section~6.

\section{Reviews of Dimension Reduction Methods}\label{sec.review}

\subsection{Preliminary}

In the rest of discussion, $Y$ is assumed to be a discrete random
variable with finite support $(1,\ldots,H)$. For the continuous
case, the discretization procedure of Li (1991) can be applied. For
the purpose of illustration, $W$ is also assumed to be a discrete
random variable with finite support $(1,\ldots,C)$. Extensions to
general $W$ will be discussed. Let $Z=\Sigma^{-1/2}(X-\mu)$ with
$\mu=E[X]$ and $\Sigma=\cov(X)$ be the standardized version of $X$.
There is no difference in considering the dimension reduction
problem for $X$ and $Z$, due to the relationships
$\syx=\Sigma^{-1/2}\syz$ and $\syxw=\Sigma^{-1/2}\syzw$. For
convenience, we will work on the $Z$-scale to introduce our method
and transform back to the $X$-scale. In practice, $Z$ is replaced by
$\widehat Z=\widehat\Sigma^{-1/2}(X-\widehat\mu)$, where
$\widehat\mu$ and $\widehat\Sigma$ are moment estimators of $\mu$
and $\Sigma$. Let $\{(Y_i,X_i,W_i)\}_{i=1}^n$ be random copies of
$(Y,X,W)$. Let $P_A$ be the orthogonal projection matrix onto a
space $A$, or $\rmspan(A)$ when $A$ stands for a matrix, and
$Q_A=I-P_A$. $I(\cdot)$ is the indicator function. For a population
quantity $\theta$, $\widehat\theta$ or $\widetilde\theta$ denotes
its sample version. In the subsequent discussion, we assume that all
the structural dimensions (e.g., $d$) are known first, and their
selections will be discussed separately.

\subsection{Estimation of $\syz$}

One branch of dimension reduction methods is to search a kernel
matrix $\Kyz$ satisfying $\rmspan (\Kyz)=\syz$. The solutions from
the maximization problem
\begin{eqnarray}
\max_{\substack{\beta_{s}:~\|\beta_{s}\|=1
\\\beta_{s}^{T}\beta_{l}=0\: \forall\: s \neq l}} \sum_{k=1}^{d}
\beta_{k}^T \Kyzhat \beta_{k} \label{criterion.1}
\end{eqnarray}
are used to estimate a basis of $\syz$. Inverse regression-based
methods usually rely on the linearity condition $E[Z|A^{T}Z]=P_{A}Z$
for first-order methods, such as sliced inverse regression (SIR) of
Li (1991), and further require the constant variance condition
$\cov(Z|A^{T}Z)=Q_{A}$ for second-order methods, such as sliced
average variance estimates (SAVE) of Cook and Weisberg (1991).

Undoubtedly, SIR is the most widely applied dimension reduction
method. The population kernel matrix of SIR is
$K_{\SIR}=\cov(E[Z|Y])$. Its sample version is
\begin{eqnarray}
\widehat K_{\SIR} = \sum\limits_{h=1}^{H}\frac{n_h}{n} m_h m_h^T,
\label{k.sir}
\end{eqnarray}
where
$m_h=\frac{1}{n_{h}}\sum\limits_{i=1}^{n}\widehat{Z}_{i}I(Y_{i}= h)$
is the slice mean and $n_h=\sum_{i=1}^nI(Y_i=h)$ is the size of the
slice $h$. Under the linearity condition, Li (1991) shows that the
span of the leading $d$ eigenvectors of $\widehat{K}_{\SIR}$ is a
$\sqrt{n}$-consistent estimator of $\syz$. SIR has the drawback of
not being able to identify $\syz$ when $E[Z|Y]$ is degenerate, and
SAVE is proposed to solve this problem (but requires both the
linearity and the constant variance conditions). The kernel matrix
of SAVE is given by $K_{\SAVE}= E\{I-\cov(Z|Y)\}^2$. At the sample
level, the leading eigenvectors of
\begin{eqnarray}
\widehat{K}_{\SAVE}=\sum\limits_{h=1}^{H}  \frac{n_h}{n} \left\{ I -
\widehat{\cov}(Z|Y= h) \right\}^2\label{k.save}
\end{eqnarray}
is used to estimate $\syz$, where $\widehat{\cov}(Z|Y=h)$ is the
sample covariance matrix of $\widehat Z_i$ within the slice $h$.

\subsection{Estimation of $\syzw$}
When the target of interest is $\syzw$, Chiaromonte, Cook, and Li
(2002) propose the partial sliced inverse regression (PSIR). Let
$(Y_w,Z_w)$ denote the random variables $(Y,Z)$ given $W=w$, and let
$Z_w^*=\Sigma_w^{-1/2}(Z_w-\mu_w)$ with $\mu_w=E[Z_w]$ and
$\Sigma_w=\cov(Z_w)$ being the standardized version of $Z_w$ under
$\{W=w\}$. The rationale of PSIR can be seen from the decomposition
\begin{eqnarray}\label{equ.psir}
\syzw=\bigoplus_{w=1}^{C} \syzww= \bigoplus_{w=1}^{C}
\Sigma_w^{-1/2}\mathcal{S}_{Y_w| Z_w^*}=\Sigma_0^{-1/2}
\bigoplus_{w=1}^{C} \mathcal{S}_{Y_w| Z_w^*},
\end{eqnarray}
where the first equality is from Proposition 3.3 of Chiaromonte,
Cook, and Li (2002) and the last equality holds under the equal
covariance condition $\Sigma_{w}=\Sigma_0, w=1,\ldots,C$. Let
$\widehat{\mu}_{w}$ and $\widehat{\Sigma}_{w}$ be the moment
estimators of $\mu_w$ and $\Sigma_w$, $\Sigma_0$ is estimated by
$\widehat{\Sigma}_0=\sum_{w=1}^{C}\frac{n_{w}}{n}
\widehat{\Sigma}_w$, where $n_w$ is the number of samples within
$\{W=w\}$. Let
\begin{eqnarray} \widehat
K_{\PSIR}=\sum_{w=1}^C\frac{n_w}{n} \widehat K_{w}^*, \label{k.psir}
\end{eqnarray}
where $\widehat K_{w}^{*}$ is the kernel matrix of SIR based on
$(Y_{w},\widehat{Z}_{w}^{*})$, and $\widehat
Z_w^*=\widehat\Sigma_0^{-1/2}(\widehat Z_w-\widehat\mu_w)$. A basis
of $\syzw$ can be estimated by $\widehat\Sigma_0^{-1/2}$ multiplying
the leading eigenvectors of $\widehat K_{\PSIR}$.


\section{Estimation of $\syz$ with Additional Information}\label{sec.two-stage}

\subsection{The $W$-envelope subspace and a two-stage method}\label{sec.est}

As mentioned in Section~\ref{sec.introduction} that $W$ contains
useful information regarding $Y$, and our aim is to incorporate $W$
into the estimation procedure of $\syz$. The basic idea is to use
$(Y,Z,W)$ to construct an \emph{envelope} that encapsulates the
searching space of $\syz$. With the confined searching space, we
have a chance to improve efficiency. The construction of such an
envelope is based on the fact that
\begin{eqnarray}
\syz\subseteq\sywz,
\end{eqnarray}
where the equality holds since $Y$ is a function of $(Y,W)$. The
inclusion property provides a way to utilize the information of $W$,
via constructing the $W$-{\it envelope subspace}.
\begin{defn}\label{def.se}
The $W$-envelope subspace of $\syz$ is defined to be $\se = \sywz$
with the structural dimension $\de=\dim(\se)$.
\end{defn}
\noindent Again, we assume that $\de$ is known and its selection
will be discussed later. Another expression of $\se$ via the concept
of PCS is established below.

\begin{prop}\label{prop.se}
$\se =\syzw \oplus \swz$.
\end{prop}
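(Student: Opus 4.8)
The plan is to characterize $\se = \sywz$ through a chain-rule decomposition of its defining conditional independence, and then to match the two resulting pieces with $\syzw$ and $\swz$ separately. The engine of the whole argument is the following equivalence, which I would establish as the first step: for any matrix $\eta$, the statement $(Y,W)\indep Z\,|\,\eta^T Z$ holds if and only if both $W\indep Z\,|\,\eta^T Z$ and $Y\indep Z\,|\,(\eta^T Z,W)$ hold. This is exactly what the standard conditional-independence (graphoid) axioms give. Writing the defining relation as $Z\indep (Y,W)\,|\,\eta^T Z$, the forward implication follows by \emph{decomposition} (dropping $Y$ from the right block yields $Z\indep W\,|\,\eta^T Z$) and by \emph{weak union} (moving $W$ into the conditioning set yields $Z\indep Y\,|\,(W,\eta^T Z)$); the converse follows by \emph{contraction}, which recombines $Z\indep W\,|\,\eta^T Z$ and $Z\indep Y\,|\,(W,\eta^T Z)$ back into $Z\indep (Y,W)\,|\,\eta^T Z$.

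For the inclusion $\syzw\oplus\swz\subseteq\se$ (where $\oplus$ denotes the sum of the two subspaces), I would take a basis matrix $\eta^*$ with $\rmspan(\eta^*)=\sywz$, so that $(Y,W)\indep Z\,|\,(\eta^*)^T Z$ by definition of the central subspace. The decomposition half of Step~1 gives $W\indep Z\,|\,(\eta^*)^T Z$, showing $\rmspan(\eta^*)$ is a dimension-reduction subspace for $W\,|\,Z$ and hence contains $\swz$; the weak-union half gives $Y\indep Z\,|\,((\eta^*)^T Z,W)$, showing $\rmspan(\eta^*)$ is a partial dimension-reduction subspace for $Y\,|\,Z$ given $W$ and hence contains $\syzw$. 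Since $\sywz$ contains both summands, it contains their sum.

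For the reverse inclusion $\se\subseteq\syzw\oplus\swz$, I would set $\mathcal{T}=\syzw\oplus\swz$ and verify directly that $\mathcal{T}$ is a dimension-reduction subspace for $(Y,W)\,|\,Z$, whence $\sywz\subseteq\mathcal{T}$ by minimality. The tool here is an elementary coarsening fact: if a conditional independence holds given $P_\phi Z$ and $\rmspan(\phi)\subseteq\mathcal{T}$, then it persists given $P_{\mathcal{T}}Z$, because $P_\phi Z$ is a function of $P_{\mathcal{T}}Z$. Applying this with $\swz\subseteq\mathcal{T}$ upgrades $W\indep Z\,|\,P_{\swz}Z$ to $W\indep Z\,|\,P_{\mathcal{T}}Z$, and applying it with $\syzw\subseteq\mathcal{T}$ (carrying the extra conditioning on $W$) upgrades $Y\indep Z\,|\,(P_{\syzw}Z,W)$ to $Y\indep Z\,|\,(P_{\mathcal{T}}Z,W)$. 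The contraction direction of Step~1 then assembles these into $(Y,W)\indep Z\,|\,P_{\mathcal{T}}Z$, as required, and the two inclusions combine to give $\se=\syzw\oplus\swz$.

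The step I expect to be the most delicate is the converse of the Step~1 equivalence (the contraction argument) together with the coarsening lemma, because $\syzw$ is defined through conditioning on the pair $(\Gamma_W^T Z,W)$ rather than on $\eta^T Z$ alone; I must therefore make sure the extra conditioning on $W$ is carried consistently through every conditional-independence manipulation and that $W$ is never inadvertently marginalized out. By contrast, the minimality of the central and partial central subspaces and the projection bookkeeping ($P_\phi Z$ being a measurable function of $P_{\mathcal{T}}Z$ when $\rmspan(\phi)\subseteq\mathcal{T}$) are routine.
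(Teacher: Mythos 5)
Your proof is correct; note that the paper states Proposition~\ref{prop.se} \emph{without} any proof, so there is no in-paper argument to compare against --- your write-up supplies exactly the details the authors leave implicit. The skeleton is the right one: the chain rule for conditional independence (decomposition and weak union in one direction, contraction in the other, all semi-graphoid properties valid for arbitrary distributions) splits $(Y,W)\indep Z\mid \eta^TZ$ into $W\indep Z\mid \eta^TZ$ and $Y\indep Z\mid(\eta^TZ,W)$; applying this with $\rmspan(\eta^*)=\sywz$ gives $\swz\subseteq\sywz$ and $\syzw\subseteq\sywz$ by minimality of the respective central subspaces, and the contraction direction shows $\mathcal{T}=\syzw\oplus\swz$ (subspace \emph{sum}, as you rightly flag --- the summands can overlap, as they do in the paper's model~(\ref{model2})) is a dimension-reduction subspace for $(Y,W)\mid Z$, whence $\sywz\subseteq\mathcal{T}$. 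Two small points deserve emphasis. First, your ``coarsening fact'' is false as a blanket statement about conditional independence --- $A\indep B\mid \phi^TZ$ does not in general survive refining the conditioning to $P_{\mathcal{T}}Z$ (take $\phi=0$: marginal independence does not imply conditional independence) --- but it is valid in precisely the two forms you use it, where $Z$ itself is one of the independent blocks: then $P(W\in\cdot\mid Z)=P(W\in\cdot\mid \phi^TZ)$ and conditioning on the $\sigma$-field generated by $P_{\mathcal{T}}Z$, of which $\phi^TZ$ is a function, closes the argument; this is the standard monotonicity of dimension-reduction subspaces (any subspace containing a DRS is a DRS, Cook, 1998), and the same computation goes through with the extra conditioning on $W$ for the partial version. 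Second, your reverse inclusion tacitly uses that $\swz$ and $\syzw$ themselves satisfy their defining conditional independence statements, i.e., the existence of the central and partial central subspaces as DRSs in their own right; this is covered by the paper's standing existence-and-uniqueness assumptions, but it is worth stating, since without it the intersection defining these subspaces need not carry the independence property.
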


\noindent Although two expressions of $\se$ are equivalent in the
population level, we will see that the expression $\se =\syzw \oplus
\swz$ provides a more robust manner to construct $\se$. Since
$\sywz$ must exist, we always have the inclusion relationship
\begin{eqnarray}
\syz \subseteq \se. \label{sss}
\end{eqnarray}
Take Example~\ref{ex.model1} to exemplify, where
$\syz=\rmspan(a\beta+\gamma)$, $\swz=\rmspan(\beta)$, and
$\syzw=\rmspan(\gamma)$ due to $\cov(X)=I_p$. It can be seen that
$\syz$ is a proper subspace of $\se=\rmspan([\beta, \gamma])$.

Reasonably, it suffices to search $\syz$ within $\se$ due to
(\ref{sss}). An improved estimation procedure is to search a basis
of $\syz$ via solving the maximization problem
\begin{eqnarray}
\max_{\substack{\beta_{s}:~\beta_{s}\in\se\\
\|\beta_{s}\|=1,\beta_{s}^{T}\beta_{l}=0\: \forall\: s\neq l}}
\sum_{k=1}^{d}  \beta_{k}^T\Kyzhat \beta_{k} \label{criterion.2}.
\end{eqnarray}
Different from (\ref{criterion.1}), the estimation criterion
(\ref{criterion.2}) incorporates the information of $W$ via adding
the constraints $\beta_{s}\in\se$. Let $\Be$ be a basis of $\se$.
From Proposition~3 of Naik and Tsai (2005), the solutions of
(\ref{criterion.2}) are derived to be the leading $d$ eigenvectors
of
\begin{eqnarray}\label{ker_2}
P_{\Be} \Kyzhat P_{\Be}.
\end{eqnarray}
Observe from (\ref{ker_2}) that, instead of searching a basis of
$\syz$ in $\real ^ p$, we first project $\Kyzhat$ onto $\se$ within
which we search a basis of $\syz$. See Figure~\ref{env.plot} for a
conceptual display. Since $\se$ is rarely known a priori, we need to
estimate $P_{\Be}$ before estimating $\syz$ from (\ref{ker_2}). Let
$K_{\rm env}$ be a positive semi-definite kernel matrix satisfying
$\rmspan(K_{\rm env})=\se$, and let $\Behat$ be the leading $\nu$
($\nu\ge\de$) eigenvectors of $\widehat K_{\rm env}$. The two-stage
estimation procedure for $\syz$ is proposed to be the leading $d$
eigenvectors of
\begin{eqnarray}\label{ker_2_est}
P_{\Behat} \Kyzhat P_{\Behat}.
\end{eqnarray}

\noindent Obviously, the construction of $\widehat K_{\rm env}$
based on $(Y,Z,W)$ plays the key role to the performance of the
two-stage method, wherein the information of $W$ should be properly
utilized. This issue will be discussed in Section~\ref{sec.se}.

%
%
%
%
%




\subsection{The construction of $\Kehat$}\label{sec.se}

We now proceed to the construction of $\Kehat$, the critical part of
the two-stage method. One approach is to use the expression
$\se=\mathcal{S}_{(Y,W)|Z}$ directly. Let $\Kywz$ be a positive
semi-definite kernel matrix satisfying
$\rmspan(\Kywz)=\mathcal{S}_{(Y,W)|Z}$. Then, one can use
$\Kehat=\Kywzhat$ in the construction of the two-stage method, and
any existing dimension reduction method can be applied to obtain
$\Kywzhat$. For example, $\Kywz$ can be chosen to be the SIR kernel
matrix $K_{\rm SIR}^*=\cov(E[Z|Y,W])$. In the sample level, one can
still use (\ref{k.sir}) to construct $\widehat K_{\rm SIR}^*$,
except $(Y,W)$ are now treated as the response to do slicing. A
naive two-stage estimator for $\syz$ is proposed below.

\begin{defn}
The two-stage estimator of $\syz$ using the leading $\nu$
eigenvectors of $\Kehat=\Kywzhat$ is defined to be $\widehat
B_0(\nu)$.
\end{defn}


\noindent The efficiency gain of $\widehat B_0(\nu)$ is guaranteed
when the dimension is correctly specified at $\nu=\de$, which
supports the superiority of the two-stage method. Let $\widetilde B$
be the direct estimator of $\syz$ based on $(Y,Z)$. We have the
following result.

\begin{prop}\label{prop.efficiency}
Assume the linearity and constant variance conditions. Consider
$\widehat K_{Y|Z}=\widehat K_{\rm SIR}$. Then, $\widehat B_0(\de)$
with $\Kywzhat=\widehat K_{\rm SIR}^*$ is asymptotically more
efficient than $\widetilde B$ in estimating $\syz$, provided that
$\rmspan(K_{\rm SIR})\bigcap\rmspan(K_{\rm SIR}^*-K_{\rm
SIR})\ne\{0\}$.
\end{prop}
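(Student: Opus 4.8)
The plan is to compare the asymptotic variances of the two subspace estimators through their first-order (influence-function) expansions, and to reduce the entire comparison to one matrix inequality governed by the increment $K_{\rm SIR}^{*}-K_{\rm SIR}$. Two population facts drive everything. First, since $\rmspan(K_{\rm SIR})=\syz\subseteq\se$ we have $P_{\Be}K_{\rm SIR}P_{\Be}=K_{\rm SIR}$, so both $\widetilde B$ and $\widehat B_0(\de)$ are Fisher-consistent for $\syz$ and the comparison is genuinely one of variances. Second, the tower property $\E[Z\mid Y]=\E[\E[Z\mid Y,W]\mid Y]$ together with the law of total covariance gives
\begin{eqnarray}
K_{\rm SIR}^{*}-K_{\rm SIR}=\E\big[\cov(\E[Z\mid Y,W]\mid Y)\big]=:D\succeq 0, \nonumber
\end{eqnarray}
so $D$ is positive semidefinite, $\rmspan(K_{\rm SIR}^{*})=\syz+\rmspan(D)\supseteq\syz$ (and equals $\se$ under the usual SIR coverage for $(Y,W)\mid Z$), and $D$ is exactly the extra between-slice variation that the finer $(Y,W)$-slicing captures. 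This identity is the engine of the proof.

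Next I would encode each subspace error by the off-diagonal block $T:\syz\to\syz^{\perp}$ of the associated eigenprojection perturbation, and split $\syz^{\perp}=\mathcal A\oplus\mathcal B$ with $\mathcal A=\se\ominus\syz$ and $\mathcal B=\se^{\perp}$. Standard eigenprojection perturbation for the rank-$d$ matrix $K_{\rm SIR}=B_0\Lambda B_0^{\T}$ yields $T^{\rm dir}=Q_{\syz}(\widehat K_{\rm SIR}-K_{\rm SIR})B_0\Lambda^{-1}$, carrying both an $\mathcal A$- and a $\mathcal B$-component. For the two-stage estimator I would write $\widehat B_0(\de)=\Behat\,\widehat C$, where $\widehat C$ collects the leading $d$ eigenvectors of the reduced matrix $\Behat^{\T}\widehat K_{\rm SIR}\Behat$, and expand $\Behat\widehat C\widehat C^{\T}\Behat^{\T}$ to first order. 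The structural claim to verify is that this expansion splits cleanly: its $\mathcal A$-component is driven by $\Be^{\T}(\widehat K_{\rm SIR}-K_{\rm SIR})\Be$ and coincides with $T^{\rm dir}$ restricted to $\mathcal A$ (the envelope corrections $\delta\Behat$ drop out at first order, being orthogonal to $\se$), whereas its $\mathcal B$-component is driven entirely by the first-stage error $Q_{\se}(\widehat K_{\rm SIR}^{*}-K_{\rm SIR}^{*})\Be$ through the perturbation of $\Behat$. Thus $T^{\rm 2s}$ and $T^{\rm dir}$ share the same $\mathcal A$-part and differ only outside the envelope.

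Under the assumed linearity and constant-variance conditions the two $\mathcal B$-variances become explicit, and the cross-covariance between the $\mathcal A$- and $\mathcal B$-parts vanishes: the $\mathcal A$-part is built from the fluctuations of $P_{\mathcal A}Z$ and the $\mathcal B$-part from those of $Q_{\se}Z$, which are conditionally uncorrelated once $\cov(Z\mid P_{\se}Z)=Q_{\se}$. Writing $m_Y=\E[Z\mid Y]$ and $m^{*}=\E[Z\mid Y,W]$, and using $\E[m_Y m_Y^{\T}]=K_{\rm SIR}$, $\E[m^{*}(m^{*})^{\T}]=K_{\rm SIR}^{*}$ together with $Q_{\se}\cov(Z\mid Y)Q_{\se}=Q_{\se}=Q_{\se}\cov(Z\mid Y,W)Q_{\se}$, the variances collapse to
\begin{eqnarray}
\var\big(T^{\rm dir}_{\mathcal B}\big)=\Lambda^{-1}\otimes Q_{\se},
\qquad
\var\big(T^{\rm 2s}_{\mathcal B}\big)=\big(B_0^{\T}(K_{\rm SIR}^{*})^{+}B_0\big)\otimes Q_{\se}. \nonumber
\end{eqnarray}
Since the $\mathcal A$-parts and the cross terms agree, asymptotic superiority of $\widehat B_0(\de)$ reduces to the single inequality $B_0^{\T}(K_{\rm SIR}^{*})^{+}B_0\preceq\Lambda^{-1}$.

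Finally I would settle this inequality by a Schur-complement computation in coordinates adapted to $\se=\syz\oplus\mathcal A$. Writing $D$ in the corresponding block form with $\mathcal A$-block $D_{22}$ and setting $S=D_{11}-D_{12}D_{22}^{+}D_{12}^{\T}\succeq 0$, the block-inverse formula gives $B_0^{\T}(K_{\rm SIR}^{*})^{+}B_0=(\Lambda+S)^{-1}\preceq\Lambda^{-1}$, which is the claimed efficiency gain. The inequality is strict in some direction precisely when $S\neq 0$, and a short argument using the positive-semidefinite range fact $\rmspan(D_{12}^{\T})\subseteq\rmspan(D_{22})$ shows $S\neq 0\iff \syz\cap\rmspan(D)\neq\{0\}$, i.e.\ exactly the stated condition $\rmspan(K_{\rm SIR})\cap\rmspan(K_{\rm SIR}^{*}-K_{\rm SIR})\neq\{0\}$. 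I expect the main obstacle to be not this final linear algebra but the rigorous derivation of the nested influence expansion in the second paragraph: one must track the perturbation of a two-level eigendecomposition, absorb the effect of the estimated standardization $\widehat Z=\widehat\Sigma^{-1/2}(X-\widehat\mu)$, and confirm that these standardization corrections and the higher-order $\delta\Behat$ terms do not contaminate the $\mathcal B$-block comparison---this is where the linearity and constant-variance assumptions must be invoked most carefully.
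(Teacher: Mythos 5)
Your proposal is correct in substance, but it takes a genuinely different route from the paper: the paper's entire proof is a one-line reduction, treating the pair $(Y,W)$ as the response with transformation $g(Y,W)=Y$ so that the claim follows directly from Theorem~1 of Hung (2012), whereas you rebuild the asymptotic comparison from scratch. What your route makes explicit is exactly the structure the citation hides. Your law-of-total-covariance identity $K_{\rm SIR}^{*}-K_{\rm SIR}=\E\big[\cov(\E[Z\mid Y,W]\mid Y)\big]=D\succeq 0$ is precisely why the transformed-response framework of Hung (2012) applies verbatim ($Y$ is a coarsening of $(Y,W)$, so the coarse SIR kernel is dominated by the fine one), and your Schur-complement endgame is sound: writing $D=R^{\T}R$ with $R=[R_1,R_2]$ in coordinates adapted to $\se=\syz\oplus\mathcal{A}$ gives $S=(Q_{R_2}R_1)^{\T}(Q_{R_2}R_1)$, and $Q_{R_2}R_1\neq 0$ holds iff some nonzero element of $\rmspan(D)$ has vanishing $\mathcal{A}$-coordinate, i.e.\ iff $\syz\cap\rmspan(D)\neq\{0\}$ --- exactly the stated condition; note also that $D_{22}=A^{\T}K_{\rm SIR}^{*}A$ is automatically positive definite on $\mathcal{A}$ because $K_{\rm SIR}^{*}$ is nonsingular on its range $\se$, so no generalized inverse is needed at that step. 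The trade-off between the two routes: the paper's citation buys rigor for free --- the nested eigenprojection expansion, the vanishing of the $\delta\Behat$ and $\widehat\Sigma^{-1/2}$-standardization contributions in the $\mathcal{B}$-block, and the zero cross-covariance between the $\mathcal{A}$- and $\mathcal{B}$-parts are all absorbed into the cited Theorem~1, whereas in your self-contained version these remain the heaviest unwritten steps (which you correctly flag as the main obstacle). Conversely, your derivation exposes where the gain actually comes from --- only the $\se^{\perp}$-component of the error shrinks, from $\Lambda^{-1}\otimes Q_{\se}$ to $(\Lambda+S)^{-1}\otimes Q_{\se}$, while the within-envelope error is unchanged at first order --- information the paper's proof does not reveal.
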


\begin{proof}
By treating $(Y,W)$ as the response and considering the function
$g(Y,W)=Y$, the result is a direct consequence of Theorem~1 of Hung
(2012).
\end{proof}


The naive two-stage estimator $\widehat B_0(\nu)$ can be improved.
One advantage of using $\Kehat=\Kywzhat$ is its simple
implementation, which can be conducted by existing algorithms with
slight modification. However, this method does not consider the
relative importance of $\syzw$ and $\swz$ when forming $\se$. For
instance, in the case of $W\indep Z$, $\swz=\{0\}$ and is useless to
improve estimating $\syz$. Using $\Kehat=\Kywzhat$ cannot adapt to
this situation and, hence, may loss efficiency in estimating $\se$
and $\syz$. The problem can be solved by constructing $\Kehat$ via
the alternative expression $\se=\syzw\oplus\swz$. Let $\Kyzw$ and
$\Kwz$ be two positive semi-definite kernel matrices satisfying
$\rmspan(\Kyzw)=\syzw$ and $\rmspan(\Kwz)=\swz$. From
Proposition~\ref{prop.se}, we have for any $\xi\in (0,1)$ that
\begin{eqnarray}\label{k.se}
\se=\rmspan \left( \xi\cdot \Kwz +(1-\xi)\cdot \Kyzw \right).
\end{eqnarray}
A more robust method is to construct $\Kehat$ by the hybrid kernel
matrix
\begin{eqnarray} \widehat K(\xi)=\xi \cdot \Kwzhat +(1-\xi) \cdot
\Kyzwhat ,\label{k.env}
\end{eqnarray}
where $\xi$ controls the relative importance of $\Kwzhat$ and
$\Kyzwhat$ in estimating $\se$. Observe that the construction of
$\Kwzhat$ is still a dimension reduction problem for $Z$, where $W$
is now treated as a response. Hence, we can apply SIR to construct
$\Kwzhat$ as in $(\ref{k.sir})$ with $Y$ being replaced by $W$. As
to $\Kyzwhat$, it is nothing but a partial dimension reduction
problem, and PSIR can be applied. An improved two-stage estimator
for $\syz$ is proposed below.

\begin{defn}
The two-stage estimator of $\syz$ using the leading $\nu$
eigenvectors of $\Kehat=\widehat K(\xi)$ is defined to be $\widehat
B(\nu,\xi)$.
\end{defn}

\noindent We note that any dimension reduction method can be applied
to construct $\Kehat$, and is not limited to SIR and PSIR.

\begin{rmk}
For multivariate $W$, one can still apply PSIR to construct
$\Kyzwhat$ by using $W$ to do slicing. As to the construction of
$\Kywzhat$ or $\Kwzhat$, this is a dimension reduction problem with
multivariate response, and the projective resampling technique (Li,
Wen, and Zhu, 2008) can be applied. With these modifications, the
same two-stage procedure is ready to estimate $\syz$ by using the
modified $\Kehat$.
\end{rmk}


\subsection{Determination of $(d, \de)$}\label{sec.eta}


To estimate the structural dimension, Li (1991) proposes an
asymptotic test based on the sum of the tail eigenvalues of the
kernel matrix. This method, however, requires the asymptotic
distribution of eigenvalues, which is complicated when $Z$ is not
normally distributed. Cook and Yin (2001) suggest a permutation test
to determine the structural dimension, but it requires heavy
computational load. Alternatively, we adopt a BIC-type criterion to
select $(d,\de)$, which is modified from the criterion of Zhu {\it
et al.} (2010). Let
\begin{eqnarray}\label{est.de_eta}
\dehat (\xi)=\argmax_{k=1,\ldots,p}
\left\{\frac{\sum_{j=1}^k\{\ln(\widehat\lambda_j+1)-\widehat\lambda_j\}}
{\sum_{j=1}^p\{\ln(\widehat\lambda_j+1)-\widehat\lambda_j\}}-
\frac{C_n}{n}(pk-\frac{k(k-1)}{2})\right\},
\end{eqnarray}
where $\widehat\lambda_j$ is the $j$-th eigenvalue of $\widehat
K_{\textrm{env}}$, $C_n$ is a pre-determined penalty, and
$pk-\frac{k(k-1)}{2}$ is the number of parameters required to
specify a $p\times p$ symmetric matrix with rank $k$. Note that
$\dehat(\xi)$ can be a function of $\xi$, depending on the choices
of $\Kehat=\widehat K(\xi)$ or $\Kehat=\Kywzhat$. To integrate out
the effect of $\xi$, we propose to estimate $\de$ by
\begin{eqnarray}\label{est.de0}
\dehat = {\rm median}\left\{\dehat(\xi):\xi\in\Xi \right\}.
\end{eqnarray}
The same idea can be applied to determine $d$. Let
\begin{eqnarray}
\widehat d(\xi)=\argmax_{k=1,\ldots,\dehat(\xi)}\left
\{\frac{\sum_{j=1}^k\{\ln(\widehat\lambda_j^*+1)-\widehat\lambda_j^*\}}
{\sum_{j=1}^p
\{\ln(\widehat\lambda_j^*+1)-\widehat\lambda_j^*\}}-\frac{C_n}{n}(pk-\frac{k(k-1)}{2})\right\},
\end{eqnarray}
where $\widehat\lambda_j^*$ is the $j$-th eigenvalue of $P_{\Behat}
\Kyzhat P_{\Behat}$ with $\nu=\widehat d_{\rm env}(\xi)$. We then
propose to estimate $d$ by
\begin{eqnarray}
\widehat d = \textrm{median}\left\{\widehat
d(\xi):\xi\in\Xi\right\}.
\end{eqnarray}

For any fixed $\xi$, the consistency of $\dehat (\xi)$ and $\widehat
d(\xi)$ can be similarly derived as Theorem~4 of Zhu {\it et al.}
(2010), provided $C_n/n \rightarrow 0$ and $C_n \rightarrow \infty$
as $n \rightarrow \infty$. Since $\Xi$ is finite, the consistency of
$(\widehat d, \dehat)$ is a direct consequence.






\section{Numerical Studies}\label{sec.simulation}

\subsection{Simulation settings}

We consider two models for simulations. The first one is model
(\ref{model1}) in Example~\ref{ex.model1} with
$\beta=b\cdot(0,0,1,1,0_{p-4})^T/\sqrt{2}$ and
$\gamma=(1,1,0,0,0_{p-4})^T/\sqrt{2}$, which gives
$\syz=\rmspan(a\beta +\gamma)$ and $\se=\rmspan([\beta,\gamma])$.
The second model is constructed as below. Let $X\sim N
\left(0,I_p\right)$  and $W= \left(W_{1},W_{2}\right)$ be generated
from $W_{1}|X \sim N\left(\beta_{1}^T X,1-\|\beta_{1}\|^2\right)$
and $W_{2}|X \sim N\left(\beta_{2}^T X,1-\|\beta_{2}\|^2\right)$,
where $\beta_{1}=b\cdot(0,0,1,1,0_{p-4})^T/\sqrt{2}$ and
$\beta_{2}=b\cdot(1,1,0,0,0_{p-4})^T/\sqrt{2}$. Condition on
$(X,W)$, $Y$ is generated from
\begin{eqnarray}
&&Y|(X,W) \sim N\left( (1+\alpha^{T} X)(aW_1+aW_2+\gamma^{T}
X),\sigma^{2}\right)\nonumber\\
&\Rightarrow& Y|X \sim N\left((1+\alpha^{T} X)(a\beta_1+ a\beta_2
+\gamma)^TX,\sigma^2+ 2a^2(1-b^{2})(1+\alpha^{T} X)^2 \right)
\label{model2}
\end{eqnarray}
with $\alpha = (0,0,0,0,1,1,0_{p-6}^T)^T/\sqrt{2}$ and $\gamma =
(1,1,2,2,0_{p-4}^T)^T/\sqrt{10}$. The setting is designed so that
$\gamma\in \rmspan([\beta_1,\beta_2])$, i.e.,
$\syzw=\rmspan([\alpha,\gamma])$ and
$\swz=\rmspan([\beta_1,\beta_2])$ have overlap. It gives
$\syz=\rmspan([\alpha, a(\beta_{1}+ \beta_{2}) +\gamma])$ and
$\se=\rmspan([\alpha,\beta_{1},\beta_{2}])$. Note that in both
models, $a$ controls the ability of $W$ to explain $Y$, and $b$
controls the correlation between $W$ and $Z$.

Simulation data is generated from two settings of $(n,p)=(150,9)$
and $(250,25)$. Both $\widehat B$ and $\widehat B_0$ are conducted
to compare with the direct method $\widetilde B$. We use SIR to
construct $\Kyzhat$ by categorizing $Y$ into $10$ slices. For
$\widehat B$, we use SIR to construct $\Kwzhat$ by categorizing each
component of $W$ into $2$ and $3$ slices for the cases of $n=150$
and $n=250$, respectively, and use PSIR to construct $\Kyzwhat$ by
further categorizing $Y$ into 3 slices within each slice of $W$. For
$\widehat B_0$, SIR is used to construct $\Kywzhat$ using the same
slicing as $\widehat B$. We also try other settings of slices, which
give similar results and thus are not reported. For the two-stage
method, we use $C_{n}=n^{1/4}$ and
$\Xi=\{\frac{5}{50},\frac{6}{30},\ldots,\frac{45}{50}\}$ to
determine $(\nu^*,\xi^*)$. The {\it trace correlation coefficient}
(Hooper, 1959) $r=\sqrt{\tr(P_{B_1}P_{B_2})/d}$ of two
$d$-dimensional subspaces with bases $B_1$ and $B_2$ is used as the
performance measure. The value of $r$ belongs to $[0,1]$, and $r=1$
indicates $\rmspan(B_1)=\rmspan(B_2)$. Simulation results are
reported under $\sigma=0.5$ and different combinations of
$a=(0,0.5,\ldots,3)$ and $b=(0.1,0.3)$, based on $m=300$ bootstraps
and 500 replicates.



\subsection{Simulation results}

We first compare the performances of the two-stage method $\widehat
B$ and the direct method $\widetilde B$ when $d$ is known.
Simulation results under models~(\ref{model1}) and (\ref{model2})
are placed in Figures~\ref{fig:11}-\ref{fig:12}~(a)-(b), which show
the means of the trace correlation $\widehat r$ and $\widetilde r$
of $\widehat B$ and $\widetilde B$, respectively. Recall that $a$
controls the influence of $W$ on $Y$, and $b$ controls the
correlation between $(Z,W)$. Thus, in the absence of $W$,
$\widetilde r$ decreases as $a$ increases. For any fixed $a$, we can
also observe a high $\widetilde r$ for large $b$, since in this
situation $Z$ can well predict $Y$ through $W$. A similar pattern
can be observed for $\widehat r$, since $\widehat B$ is obtained
from modifying $\Kyzhat$.

The magnitude of improvement $(\widehat r-\widetilde r)$ shows a
different behavior. One can see that $(\widehat r-\widetilde r)$
increases as $a$ increases. The more information $W$ contains (i.e.,
large $a$), the more improvement $\widehat B$ can achieve. On the
other hand, $(\widehat r-\widetilde r)$ increases as $b$ decreases.
Note that when $b$ is small, $Z$ can hardly be a surrogate of $W$
and, hence, the two-stage method benefits more from utilizing $W$.
Overall, $\widehat B$ outperforms $\widetilde B$ in all settings.

The trace correlation $\widehat r_0$ of $\widehat B_0$ is also shown
in Figures~\ref{fig:11}-\ref{fig:12}~(a)-(b). One can obviously see
that the winner is still $\widehat B$, followed by $\widehat B_0$
and then $\widetilde B$. These results further indicate the benefit
of using a hybrid method to estimate $\se$ via $\Kyzw$ and $\Kwz$.
Indeed, we rarely know the relative importance of $\Kyzw$ and
$\Kwz$, which is totally ignored when directly estimating $\se$ by
$\Kywzhat$. By using the hybrid kernel matrix $\widehat K(\xi)$, it
allows the data to select $\xi$ with minimal variability to adapt to
various relationships between $\Kyzw$ and $\Kwz$, and a good
performance of $\widehat B$ is achieved.

The simulation results of $\widehat d$ using $\Kehat=\widehat
K(\xi)$ and different $C_n$ values are placed in
Table~\ref{table:1}, which shows the selection proportions under
four settings of two models with $(n,p)=(150,9)$. The results of
$\widetilde d$, the estimator of $d$ from using $\Kyzhat$ directly,
are also shown for comparisons. One can see that $\widehat d$
achieves higher accuracies than $\widetilde d$ over a wide range of
$C_n$. By incorporating $W$, we cannot only improve estimating
$\syz$, but also improve the selection consistency for the
structural dimension $d$.




\section{Data Analysis}

\subsection{The Pima Indians diabetes data} \label{sec.data.pima}

The data contains females of Pima Indian heritage, each with 7
biological covariates, 1 covariate of family disease history, and an
indicator of diabetes status. Detailed description of the data can
be found in Smith \textit{et al.} (1988). After removing
observations with missing values, there are 392 patients remained.
In our analysis, we take the score of the family disease history as
$W$, which is shown to have strong association with diabetes status
$(Y)$. However, missingness is very likely to occur when collecting
$W$, which limits its usage in future applications. It is thus of
interest to construct prediction rule for the diabetes status based
solely on the rest biological measurements $(X)$.

Since SIR can only find one direction for binary response, we use
SAVE to construct $\Kyzhat$. As to the estimation of $\se$, we apply
PSIR to construct $\Kyzwhat$, and apply SIR to construct $\Kwzhat$.
In this analysis, we choose $(\nu,\xi)$ of the two-stage method via
maximizing the leave-one-out classification accuracy (CA) from
quadratic discriminant analysis (Rencher, 1995), which gives
$\widehat B=\widehat B(4,0.2)$.
The maximum leave-one-out CA is $0.7959$ for $(\widehat
S_{1},\widehat S_{2})$, while it is $0.7041$ for $(\widetilde
S_1,\widetilde S_2)$, and is $0.7781$ for $(\widetilde
S_1,\widetilde S_2,\widetilde S_3)$. It indicates an efficiency gain
from using $W$. This fact can be further observed from the scatter
plots of $\widehat S_i$ and $\widetilde S_i$ in Figure~\ref{fig:2}.
One can see that $(\widetilde S_{1}, \widetilde S_{2})$ tend to
separate diabetes status by variation, while different locations of
two groups are detected for $\widetilde S_3$. Interestingly,
$(\widehat S_1,\widehat S_2,\widehat S_3)$ demonstrate different
behaviors in that patients with different diabetes status tend to
have different variations of $\widehat S_1$, while different
locations are observed for $\widehat S_2$. Moreover, $\widehat S_3$
is useless in separating the diabetes status, suggesting that one
only requires $(\widehat S_1,\widehat S_2)$ to infer the diabetes
status. However, it requires $(\widetilde S_1,\widetilde
S_2,\widetilde S_3)$ when ignoring $W$. By utilizing $W$, the order
of the ``location-separating component'' is also changed from
$\widetilde S_3$ to $\widehat S_2$.

Following the procedure of Li (2006), we also implement the
quadratic discriminant analysis to classify subjects by using $W$
together with the
leading two components of $\Kyzwhat$. 
The result is treated as the benchmark since $W$ is directly used in
the classification process. The resulting leave-one-out CA for the
benchmark method is $0.7985$. By using the two-stage procedure, we
only require two dimension reduction components of biological
measurements to infer the behavior of diabetes status, and can
achieve comparable performance as the benchmark method.


\section{Discussion}

In this article, we propose a general framework to utilize the
additional information of $W$ to improve estimating $\syz$, via
constructing the $W$-envelope subspace $\se$.

The central mean subspace $\syxe$ is the minimum subspace
$\rmspan(\Gamma_m)$ such that $E[Y|X]=E[Y|\Gamma_m^TX]$. Obviously,
we must have $\syxe \subseteq \syx$. Therefore, $\syx$ may contain
redundant directions to infer $E[Y|X]$, and $\syxe$ should be the
target when the research interest is the conditional mean $E[Y|X]$.
The idea of $\syxe$ is first proposed by Cook and Li (2002), wherein
the estimation method is also developed. Later, Xia {\it et al.}
(2002) propose the minimum average variance estimation (MAVE), which
is based on local linear smoothers and does not require strong
assumptions on the distribution of $X$. With the presence of $W$, an
interesting question is how to utilize $W$ to improve the estimation
of $\syxe$. By definition, we have the inclusion property
\begin{eqnarray*}
\syze \subseteq \syz \subseteq \se.
\end{eqnarray*}
It implies that the idea of the envelope subspace $\se$ can still be
applied, to confine the searching space of $\syxe$ and, hence, to
enhance the estimation efficiency. Although the idea is
straightforward, efforts should be made to adapt to different
estimation criteria (such as MAVE) for $\syxe$.\\


\begin{center}
{\Large \textbf{References}}
\end{center}

\begin{description}

\item
Chiaromonte, F., Cook, R. D., and Li, B. (2002). Sufficient dimension reduction in regressions with categorical predictors. {\it
The Annals of Statistics}, 30, 475-497.
\item
Cook, R. D., and Weisberg, S. (1991). Discussion of ``Sliced inverse regression for dimension reduction,'' By K.-C. Li. {\it Journal of the American Statistical
Association}, 86, 328-332.

\item
Cook, R. D., and Yin, X. (2001). Dimension eduction and
visualization in discriminant analysis (with discussion). {\it Aust.
N. Z. J. Stat.}, 43, 147-199.

\item
Cook, R. D. (1998). \emph{Regression Graphics: Ideas for Studying
Regressions through Graphics}. New York: Wiley.
\item
Cook, R. D., and Li, B. (2002). Dimension reduction for the conditional mean in regression. {\it The Annals of Statistics}, 30, 455-474.
\item
Ding, X., and Wang, Q. (2011). Fusion-refinement procedure for dimension reduction with missing response at random. {\it Journal of the American Statistical Association}, 106, 1193-1207.
\item
Hooper, J. (1959). Simultaneous equations and canonical correlation theory. {\it Econometrica}, 27, 245-256.


\item
Hung, H. (2012). A two-stage dimension reduction method for
transformed response and its applications. {\it Biometrika}, 99,
865-877.

\item
Robins, J. M. and Greenland, S. (1992). Identifiability and
exchangeability for direct and indirect effects. {\it Epidemiology},
3, 143-155.

\item
Li, B., Wen, S., and Zhu, L. (2008). On a projective resampling method for dimension reduction with multivariate responses. {\it Journal of the American Statistical Association}, 103, 1177-1186.
\item
Li, K. C. (1991). Sliced inverse regression for dimension reduction
(with discussion). {\it Journal of the American Statistical
Association}, 86, 316-327.
\item
Li, K. C., Wang, J. L., and Chen, C. (1999). Dimension reduction for
censored regression data. {\it The Annals of Statistics}, 27, 1-23.
\item
Li, L. (2006). Survival prediction of diffuse large-B-cell lymphoma
based on both clinical and gene expression information. {\it
Bioinformatics}, 22, 466-471.




\item
Mackinnon, D. P., Fairchild, A. J., and Fritz, M. S. (2007).
Mediation analysis. {\it Annu Rev Psychol.}, 58: 593.

\item
Pearl J. (2001). Direct and indirect effects. {\it Proceedings of the Seventh Conference on Uncertainty in Artificial Intelligence}, 411-420.

\item
Rencher, A. (2002). \emph{Methods of Multivariate Analysis}. New
York: Wiley
\item
Smith, J., Everhart, J., Dickson, W., Knowler, W., and  Johannes, R. (1988). Using the ADAP learning algorithm to forecast the onset of diabetes mellitus. {\it Proceedings of the Symposium on Computer Applications and Medical Care}, 9, 261-265.
\item
Naik, P. A., and Tsai, C. L. (2005). Constrained inverse regression for incorporating prior information. {\it Journal of the American Statistical Association}, 100, 204-211.
\item
Xia, Y., Tong, H., Li, W. K., and Zhu, L. X. (2002). An adaptive estimation of dimension reduction space. {\it Journal of the Royal Statistical Society, Ser. B}, 64, 363-410.
\item
Ye, Z., and Weiss, R. E. (2003). Using the bootstrap to select one of  a new class of dimension reduction methods. {\it Journal of the
American Statistical Association}, 98, 968-979.
\item
Zhu, L. P., and Zhu, L. X. (2009). A data-adaptive hybrid method for dimension reduction.  {\it Journal of Nonparametric Stistics}, 21,
851-861.
\item
Zhu, L. P., Zhu, L. X., Ferre, L., and Wang, T. (2010). Sufficient dimension reduction through discretization-expectation estimation.
{\it Biometrika}, 97, 295-304.
\item
Zhu, L. X., Ohtakic, M., and Lid, Y. (2007). On hybrid methods of inverse regression-based algorithms. {\it Journal of Computational
Statistics $\&$ Data Analysis}, 51, 2621-2635.

\end{description}

\newpage

\begin{table}
\centering \caption{Selection proportions of $\widehat{d}$ and
$\widetilde{d}$ under models (\ref{model1}) and (\ref{model2}) with
different combinations of $(a,b,C_{n})$. The columns correspond to
the true dimension $d$ are marked as bold.} \footnotesize
\begin{tabular}{ccc|cccc|cccc|cccc}
\hline \hline
& & & \multicolumn{4}{c|}{$C_n=0.5n^{1/4}$} & \multicolumn{4}{c|}{$C_n=n^{1/4}$} & \multicolumn{4}{c}{$C_n=2n^{1/4}$} \\
\cline{4-15}
model & $(a,b)$& & 1 & 2 & 3 & $> 3$ & 1 & 2 & 3 & $> 3$ & 1 & 2 & 3 & $> 3$ \\
\hline
\multirow{8}{*}{(\ref{model1})}&\multirow{2}{*}{$(1,0.1)$}
& $\widehat{d}$  & \bf .365 & .625 & .010 &  .000   & \bf .855 & .145 & .000    & .000 & \bf .985 & .015 & .000 & .000    \\
& & $\widetilde{d}$ & \bf .000 & .010 & .500 & .490 & \bf .000 & .205 & .735 & .060 & \bf .005 & .790 & .205 & .000    \\
\cline{2-15}
&\multirow{2}{*}{$(1,0.3)$}
& $\widehat{d}$  & \bf .470 & .520 & .010 & .000 & \bf .885 & .115 & .000 & .000 & \bf .995 & .005 & .000 & .000 \\
& & $\widetilde{d}$ & \bf .000 & .005 & .540 & .455 & \bf .000 & .205 & .750 & .045 & \bf .005 & .870 & .125 & .000 \\
\cline{2-15}
&\multirow{2}{*}{$(3,0.1)$}
& $\widehat{d}$  & \bf .015 & .490 & .485 & .010 & \bf .075 & .855 & .070 & .000 & \bf .480 & .520 & .000 & .000 \\
& & $\widetilde{d}$ & \bf .000 & .000 & .030 & .970 & \bf .000 & .000 & .265 & .735 & \bf .000 & .080 & .715 & .205 \\
\cline{2-15}
&\multirow{2}{*}{$(3,0.3)$}
& $\widehat{d}$  & \bf .010 & .650 & .335 & .005 & \bf .165 & .795 & .040 & .000 & \bf .660 & .340 & .000 & .000    \\
& & $\widetilde{d}$ & \bf .000 & .000 & .055 & .945 & \bf .000 & .005 & .475 & .520 & \bf .000 & .135 & .810 & .055 \\
\hline
\multirow{8}{*}{(\ref{model2})}&\multirow{2}{*}{$(1,0.1)$}
& $\widehat{d}$  & .000 & \bf .480 & .520 & .000 & .010 & \bf .905 & .085 & .000 & .110 & \bf .890 & .000 & .000 \\
& & $\widetilde{d}$ & .000 & \bf .005 & .175 & .820 & .000 & \bf .010 & .625 & .365 & .000 & \bf .210 & .765 & .025 \\
\cline{2-15}
&\multirow{2}{*}{$(1,0.3)$}
& $\widehat{d}$  & .000 & \bf .710 & .290 & .000 & .000 & \bf .985 & .015 & .000 & .105 & \bf .895 & .000 & .000 \\
& & $\widetilde{d}$ & .000 & \bf .000 & .220 & .780 & .000 & \bf .045 & .660 & .295 & .000 & \bf .370 & .615 & .015  \\
\cline{2-15}
&\multirow{2}{*}{$(3,0.1)$}
& $\widehat{d}$  & .005 & \bf .535 & .450 & .010 & .085 & \bf .845 & .070 & .000 & .545 & \bf .455 & .000 & .000 \\
& & $\widetilde{d}$ & .000 & \bf .000 & .125 & .875 & .000 & \bf .010 & .570 & .420 & .000 & \bf .260 & .715 & .025 \\
\cline{2-15}
&\multirow{2}{*}{$(3,0.3)$}
& $\widehat{d}$  & .000 & \bf .535 & .465 & .000 & .015 & \bf .895 & .090 & .000 & .185 & \bf .815 & .000 & .000 \\
& & $\widetilde{d}$ & .000 & \bf .000 & .105 & .895 & .000 & \bf .010 & .530 & .460 & .000 & \bf .190 & .775 & .035 \\
\hline \hline
\label{table:1}
\end{tabular}
\end{table}

\clearpage

\begin{figure}[!ht]
\centering
\includegraphics[width=2.5in]{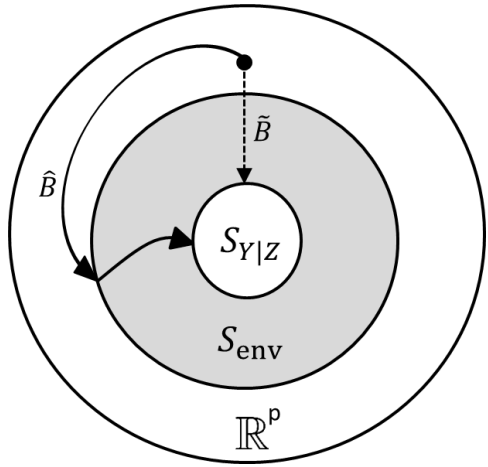}
\caption{The conceptual display of the two-stage method $\widehat B$
(solid line) and the direct method $\widetilde B$ (dashed line).}
\label{env.plot}
\end{figure}


\begin{figure}[!ht]
\centering
\includegraphics[width=6in]{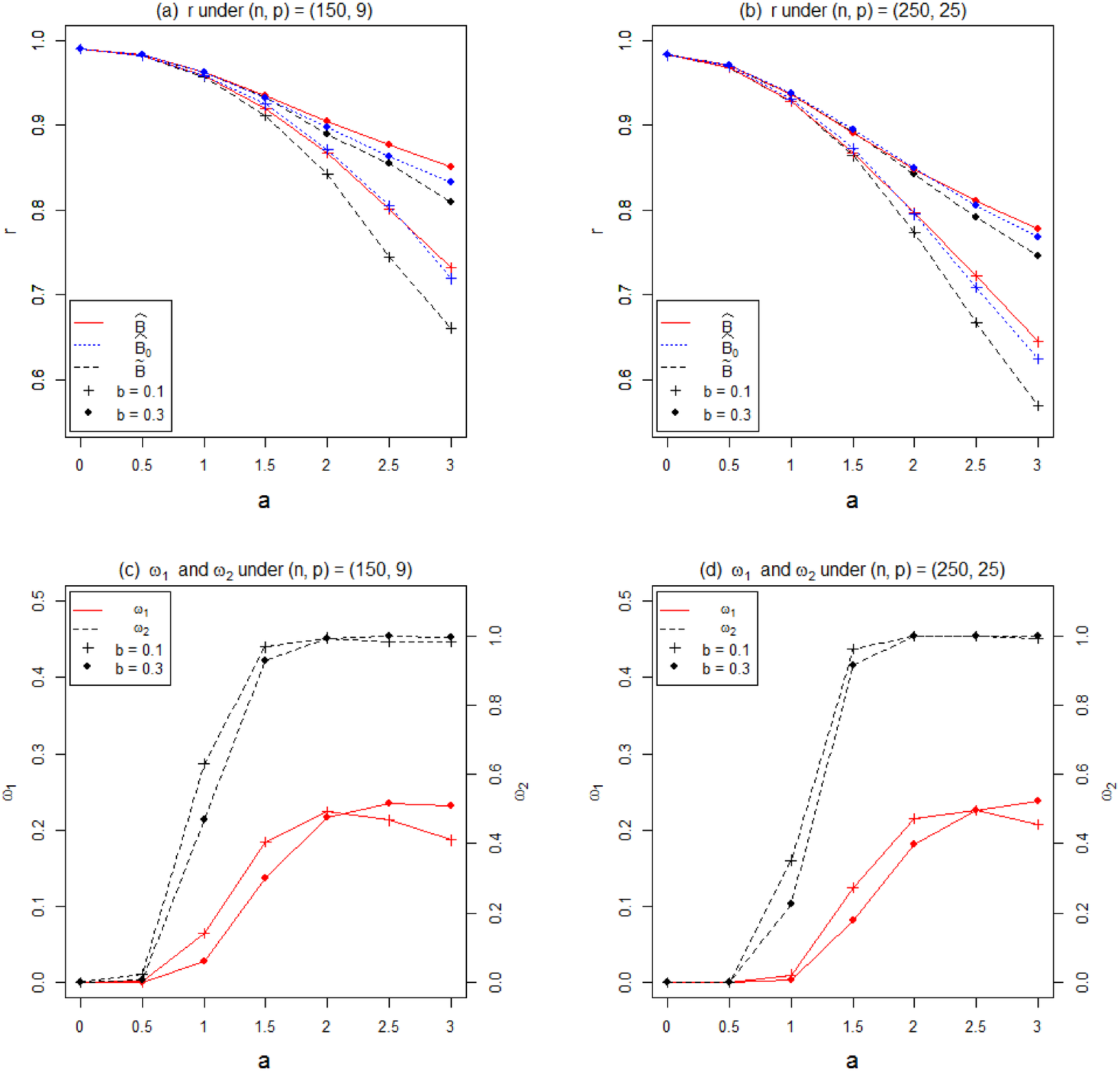}
\caption{Figures (a)-(b) are trace correlation coefficients and
Figures (c)-(d) are variabilities under model~(\ref{model1}). The
left panels show the case of $(n,p)=(150,9)$ and the right panels
show the case of $(n,p)=(250,25)$.} \label{fig:11}
\end{figure}

\begin{figure}[!ht]
\centering
\includegraphics[width=6in]{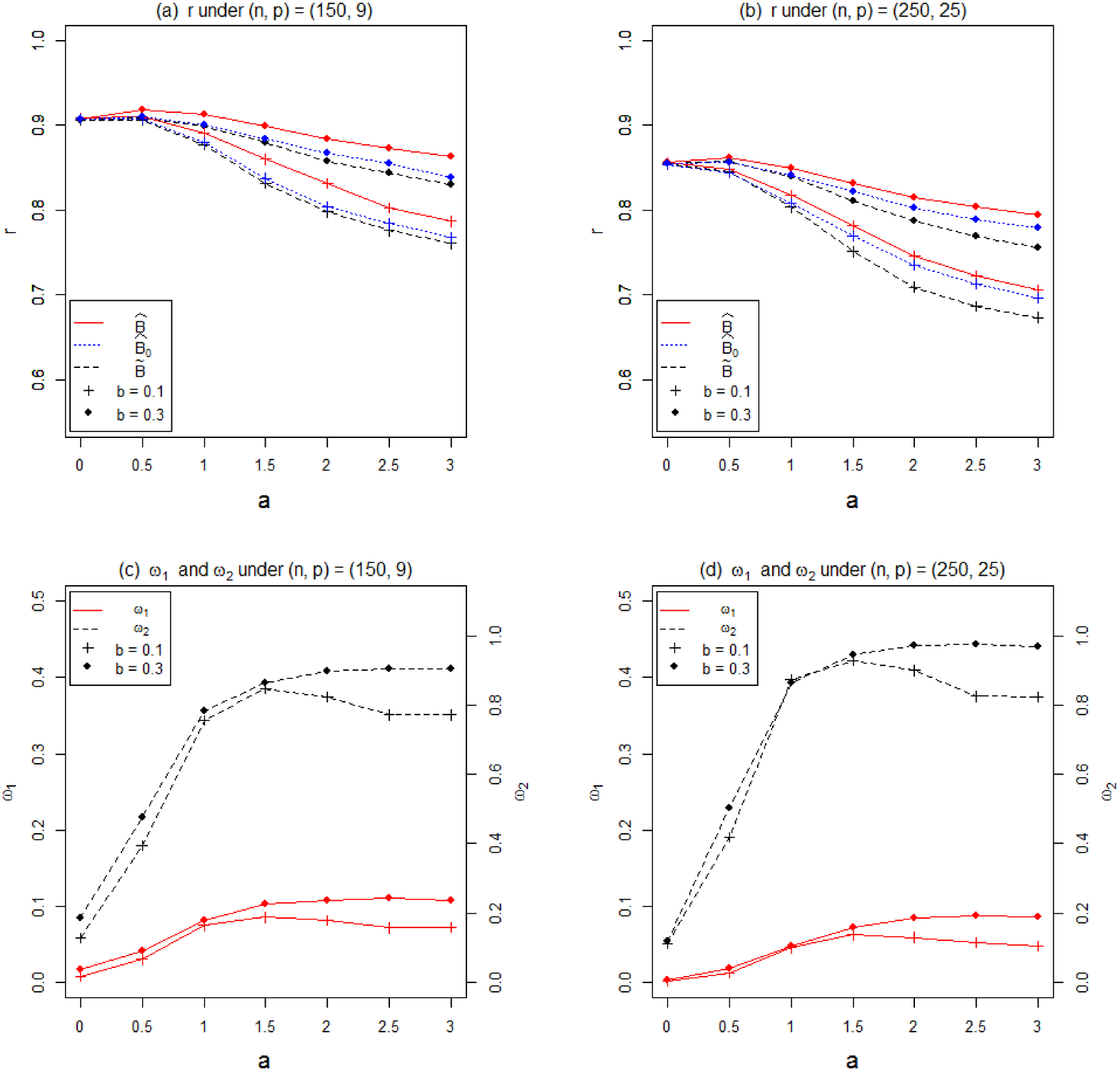}
\caption{{ Figures (a)-(b) are trace correlation coefficients and
Figures (c)-(d) are variabilities under model~(\ref{model2}). The
left panels show the case of $(n,p)=(150,9)$ and the right panels
show the case of $(n,p)=(250,25)$.}} \label{fig:12}
\end{figure}

\begin{figure}[!ht]
\centering
\includegraphics[width=6in]{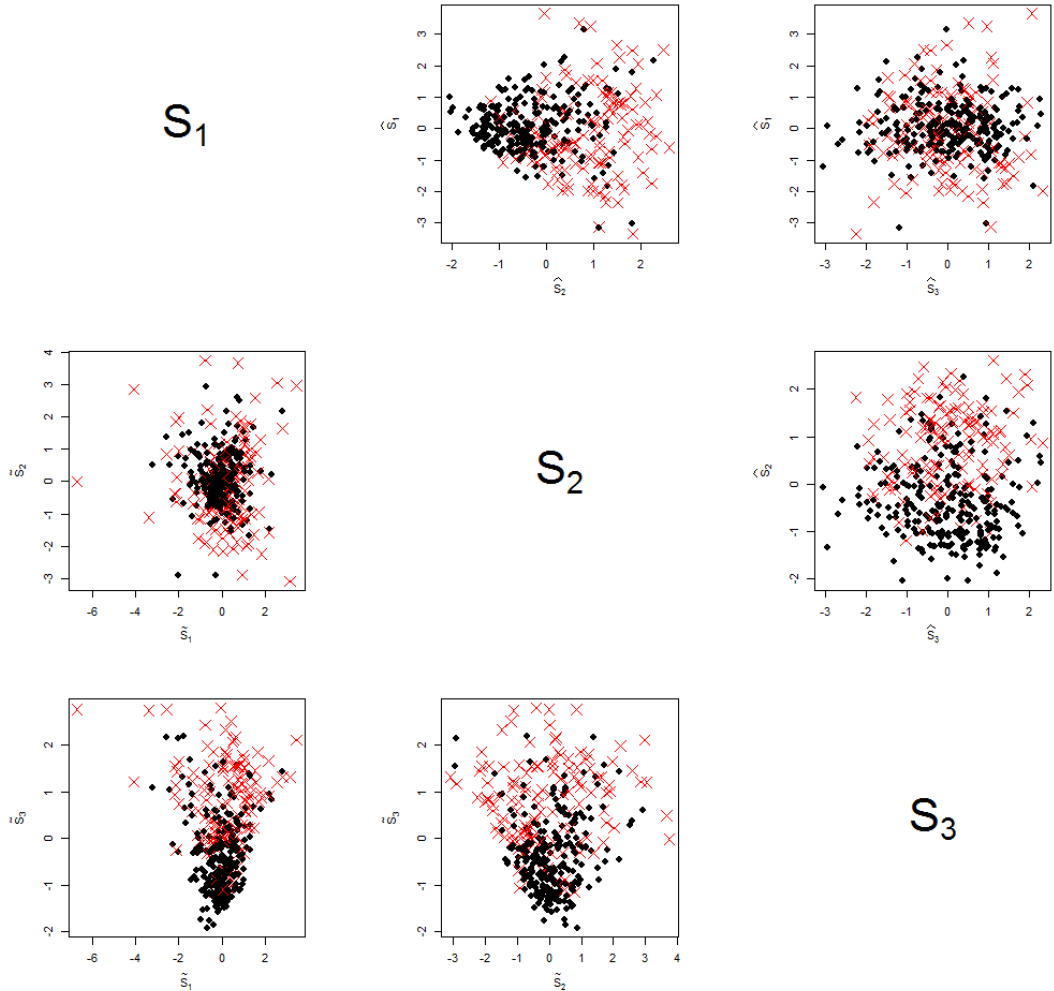}
\caption{{The scatter plot matrix of the leading three extracted
predictors from two-stage method in the upper triangular panels, and
from direct method in the lower triangular panels. $\bullet$ and
$\times$ indicate the normal and diabetes patients, respectively}}
\label{fig:2}
\end{figure}

\end{document}